\documentclass[acmsmall,nonacm]{acmart}
\AtBeginDocument{%
  }

\setcopyright{acmlicensed}
\copyrightyear{2026}
\acmYear{2026}
\acmDOI{XXXXXXX.XXXXXXX}

\acmJournal{PACMPL}
\acmVolume{1}
\acmNumber{1}
\acmArticle{1}
\acmMonth{1}

\usepackage{booktabs}
\usepackage{subcaption}

\usepackage{preamble}

\newcommand{\kw}[1]{\mathtt{#1}}

\newcommand{\rele}{\lesssim}

\newcommand{\lsrc}{\textnormal{$\lambda_{\lk{DS}}$}\xspace}

\newcommand{\ctr}{{\mathrm C}}
\newcommand{\fun}[1]{{\sf fix}(#1)}
\newcommand{\lets}{{\sf let}}
\newcommand{\ins}{{\sf in}}
\newcommand{\match}{{\sf match}}
\newcommand{\with}{{\sf with}}

\newcommand{\clossrc}[2]{{\sf Clos}(#1, #2)}
\newcommand{\clostrg}[2]{{\kw{Clos}}(#1, #2)}

\newcommand{\evalsrc}[4]{#1; #2 \tensor*{\Downarrow}{^{#3}_{\kw{src}}} #4}

\newcommand{\evalsrcw}[3]{#1; #2 \Downarrow_{\kw{src}} #3}

\newcommand{\evaltrg}[4]{#1; #2 \tensor*{\Downarrow}{^{#3}_{\kw{trg}}} #4}
\newcommand{\evaltrgw}[3]{#1; #2 \Downarrow_{\kw{trg}} #3}

\renewcommand{\vec}[1]{\overline{#1}}
\makeatletter
\DeclareRobustCommand\onedot{\futurelet\@let@token\@onedot}
\def\@onedot{\ifx\@let@token.\else.\null\fi\xspace}

\def\eg{\emph{e.g}\onedot}

\makeatother

\newcommand{\bnfalt}{{\bf \,\,\mid\,\,}}
\newcommand{\bnfdef}{{\bf ::=}}

\newcommand{\lk}[1]{%
  \texttt{%
    \fontdimen2\font=2pt   
    #1}}

\newcommand{\pri}[1]{.#1}

\newcommand{\constr}[1]{\mathrm{C}(#1)}
\newcommand{\constri}[2]{\mathrm{C}_{#1}(#2)}
\newcommand{\clos}[2]{\langle#1, #2\rangle}

\newcommand{\lanf}{\textnormal{$\lambda_{\lk{ANF}}$}\xspace}

\newcommand{\val}{\kw{Val}}

\newcommand{\var}{\kw{Var}}

\newcommand{\letin}[3]{\lk{let}~#1=#2~\lk{in}~#3}

\newcommand{\halt}[1]{\lk{ret}(#1)}
\newcommand{\letfunin}[3]{\lk{fun}~#1=#2~\lk{in}~#3}
\newcommand{\letfun}[2]{\lk{fun}~#1=#2}

\newcommand{\casew}[2]{\lk{case}~#1~\lk{of}~[\mathrm{C}_i~\rightarrow~#2_i]_{i \in I}}

\newcommand{\fv}[1]{{\sf fv}(#1)}

\newcommand{\len}[1]{{\sf len}(#1)}

\newcommand{\True}{\lk{True}}
\newcommand{\False}{\lk{False}}

\newcommand{\defeq}{\overset{\mathrm{def}}{=}}

\newcommand{\res}[1]{{\sf Res}(#1)}
\newcommand{\oot}{{\sf OOT}}

\NewDocumentCommand\DownArrow{O{2.0ex} O{black}}{%
   \mathbin{\tikz[baseline] \draw [<-, line width=0.5pt, #2] (0,-0.07cm) -- ++(0,#1);}
}

\newcommand{\comp}{\circ}

\newcommand{\ctx}{\mathcal{E}}

\newcommand{\obs}{\mathrel{\approx}}

\DeclareMathAlphabet\mathbfcal{OMS}{cmsy}{b}{n}

\newcommand{\exprelww}[3]{\mathbfcal{E}^{#1} #2~#3 }

\newcommand{\valrelw}[3]{\mathbfcal{V}^{#1} (#2, #3) }

\newcommand{\resrelw}[3]{\mathbfcal{R}^{#1} (#2, #3) }

\newcommand{\src}{\mathrm{src}}
\newcommand{\trg}{\mathrm{trg}}

\newcommand{\anfrel}[6]{#1; #3; #2 \vartriangleright #4; #5; #6}
\newcommand{\anfexps}[6]{#1; #3; #2 \vartriangleright_{\sf e} #4; #5; #6}

\begin{document}

\title{Machine-Generated, Machine-Checked Proofs for a Verified Compiler (Experience Report)}

\author{Zoe Paraskevopoulou}
\affiliation{
  \institution{National Technical University of Athens}
  \country{Greece}
}
\email{zoepar@softlab.ntua.gr}
\begin{abstract}
  We report on using an agentic coding assistant (Claude Code, powered by Claude
  Opus~4.6) to mechanize a substantial Rocq correctness proof from scratch, with
  human guidance but without human proof writing.
  The proof establishes semantic preservation for the administrative normal
  form (ANF) transformation in the CertiCoq verified compiler for Rocq.
  The closely related continuation-passing style (CPS) transformation in
  CertiCoq was previously proved correct by human experts over several
  months~\cite{CPS:Paraskevopoulou21:compilingwithcontinuations}. We use
  this proof as a template and instruct the LLM to adapt the proof
  technique to the ANF setting, which differs in important technical ways.
  The resulting ANF proof comprises approximately 7,800 lines of Rocq (larger
  than the 5,300-line CPS proof) and was developed in approximately 96~hours.
  We describe the proof technique and report on the experience of
  developing it with an LLM, discussing both the strengths and
  limitations of the approach and its implications for verified compiler
  construction.
\end{abstract}


\maketitle

\section{Introduction}
\label{sec:intro}

Recent advances in large language models (LLMs) have transformed software
development.
Agentic coding assistants such as GitHub Copilot~\cite{github-copilot} and
Claude Code~\cite{claude-code} are now routinely used to generate and refactor
code, and early reports suggest that LLMs can dramatically accelerate software
development~\cite{peng2023copilot}.
At the same time, correctness of LLM-generated code remains a pressing issue.
Software bugs are pervasive even in carefully hand-written code; for
LLM-generated code that the user may not fully understand, the risks are
compounded.

This raises a question: can LLMs generate not only code, but also
\emph{machine-checked proofs of correctness} for that code?
If so, software could, in principle, come with formal guarantees certified by an
independent proof checker, guarantees that are unaffected by whether the code
was written by a human or an LLM.
This vision, \emph{machine-generated code with machine-generated,
machine-checked proofs}, represents a compelling direction for the future of
software engineering, and one that is becoming increasingly plausible.

This paper contributes a case study in this direction.
We investigate whether a state-of-the-art LLM, given a suitable template and
human guidance, can develop a substantial mechanized proof in the Rocq
proof assistant.
Our case study is grounded in a concrete, technically demanding compiler
verification task: proving the correctness of an ANF transformation in the
CertiCoq verified compiler~\cite{Coq:Anand17:certicoq}.

\paragraph{The case study.}
CertiCoq compiles Gallina (the specification language of the Rocq prover) to
C~\cite{savary19:codegen} and WebAssembly~\cite{certicoq-wasm}, via a pipeline
of transformations, a substantial portion of which has been formally
verified~\cite{Proofs:Paraskevopoulou19:compositionalopt,
Proofs:Paraskevopoulou19:closureconversion, Proofs:Belager17:shrinkfast}.
The pipeline employs MetaCoq~\cite{Coq:Sozeau19:themetacoq} as a front end to
extract an untyped intermediate representation from Gallina, which is then
translated into a functional intermediate language via either a
continuation-passing style (CPS) transformation or an administrative normal
form (ANF) transformation.
The correctness of the CPS transformation was established by
\citeN{CPS:Paraskevopoulou21:compilingwithcontinuations} using a novel proof
technique that combines syntactic simulation arguments with step-indexed logical
relations~\cite{Logrel:Ahmed06:step-indexsyntactic,Logrel:Appel01:anindexed}.
The proof, mechanized in Rocq, is approximately 4,200 lines and was developed
over several months by human experts.
The application of the same proof technique to the ANF transformation was left as
future work.

In this paper, we report on an experiment in which we used Claude Code (an
agentic LLM tool for software development) to develop the ANF correctness proof,
with the authors providing high-level guidance, help with reasoning using
natural language explanations (which was required in surprisingly few cases),
and and reviewing the generated proofs.
The authors did not write Rocq proofs directly and the proof was conducted from
scratch.
The experiment was largely successful.
The LLM produced approximately 6,000 lines of Rocq proof, covering the central
semantic preservation theorem for terminating programs, several auxiliary
lemmas, and various supporting definitions.
The LLM correctly identified that the technique that is used to show that the
CPS transformation preserves diverging behaviors does not generalize to the ANF
setting, proving a counterexample.
The proof follows the same high-level structure as the CPS proof, but, despite
the structural similarity, there are important technical differences in the
reasoning due to the different nature of the ANF transformation.
As an empirical evidence of complexity relative to the CPS proof, we report that
an author of this paper has previously attempted to develop the ANF proof
manually but set it aside due to complexity and time constraints.
%
%
The progress of the LLM-mechanized proof was documented with frequent commits to
a public repository that span the course of approximately 96 hours.

\paragraph{Contributions.}
This paper makes the following contributions:
\begin{itemize}
\item We present a mechanized correctness proof of the ANF transformation of the
  CertiCoq compiler (\cref{sec:anf-proof}), following the proof technique of the
  CPS correctness proof
  of~\citeN{CPS:Paraskevopoulou21:compilingwithcontinuations}.
  The correctness statement covers semantic preservation for terminating
  source programs.
%

\item We report on the experience of using an LLM (Claude Code,
  powered by Claude Opus 4.6) to develop this proof
  (\cref{sec:experience}).
  We describe the experimental setup, the interaction process, the
  challenges encountered, and the lessons learned.

\item We discuss the broader implications of LLM-assisted proof
  development for programming languages research and verified compiler
  construction (\cref{sec:discussion}).
\end{itemize}

\paragraph{Paper overview.}
In \cref{sec:technique}, we overview the proof technique used for the CPS
transformation.
\Cref{sec:tech} provides the technical setup for the proof, including the source
and target languages, their semantics, and the logical relation.
In \cref{sec:anf-proof}, we describe the ANF transformation and its correctness proof.
\Cref{sec:experience} reports on the experience of developing the proof with an LLM,
and \cref{sec:discussion} discusses implications and future directions.
We cover related work in \cref{sec:related} and conclude in \cref{sec:conclusion}.
\Cref{app:statistics} presents detailed proof development statistics and
\cref{app:prompts} provides a curated selection of user prompts from the
conversation transcripts.

\section{The CPS Proof Technique}\label{sec:technique}

Proving a compiler transformation correct amounts to proving that the observable
behaviors of the compiled program are a subset of the observable behaviors of
the source program.
For deterministic languages, this can be expressed as forward
simulation~\cite{leroy09}: whenever the source program takes a step $e \to e'$,
the compiled program takes one or more steps to reach the translation of $e'$,
that is,  $[ e ] \to^+ [ e' ]$.
There are many transformation that satisfy this simple commuting diagram, but
the CPS transformation is not one of them due to well-studied technical
obstacles~\cite{CPS:Dargaye07:mechanizedverification}.

Let $[e]_k$ denote the CPS translation of an expression $e$ with current
continuation $k$, and $\bar{v}$ the CPS translation of a value.
The application case of a CPS transformation can be defined as
$[v_1\;e_2]_k \;=\; (\lambda m.\;[e_2]_{\lambda n.\,m\,k\,n})\;\bar{v}_1$.
The outer $(\lambda m.\;\ldots)\;\bar{v}_1$ is an
\emph{administrative redex}: a $\beta$-redex that the transformation
introduces with no source counterpart.

Suppose $v_1\;e_2 \to v_1\;e_2'$ where $e_2 \to e_2'$.
A naive simulation tries to show $[v_1\;e_2]_k \to^+ [v_1\;e_2']_k$.
Contracting the administrative redex and applying the induction
hypothesis on $e_2 \to e_2'$ brings the target to
$[e_2']_{\lambda n.\,\bar{v}_1\,k\,n}$.
But the goal $[v_1\;e_2']_k = (\lambda m.\;[e_2']_{\lambda n.\,m\,k\,n})\;\bar{v}_1$
is \emph{not} this term: it is a fresh administrative redex whose
contractum is exactly what we reached.
We would need to step \emph{backwards} to close the diagram, which is
impossible.
The naive simulation fails.

An orthogonal obstacle arises from fresh binder names.
As ~\citeN{CPS:Paraskevopoulou21:compilingwithcontinuations} explain, when $ e
\to e' $ is a reduction step in the source, $[ e ]_k$ may reduce to a term that
is $\alpha$-equivalent to $[ e' ]_k$ but not syntactically equal, because two
invocations of the CPS transformation may use different fresh-name counters.
The simulation diagram must then be extended with $\alpha$-equivalence, which
is possible but tedious and brittle.


The key insight of \citeN{CPS:Paraskevopoulou21:compilingwithcontinuations} is
to replace the reduction relation at the target level with \emph{logical
relatedness}.
Rather than requiring that $[e]_k$ reduces to exactly $[e']_k$ after
each source step $e \to e'$, the proof only requires:
\[
  e \to e' \quad\Longrightarrow\quad
  [e]_k \;\rele\; [e']_k.
\]

where $\rele$ is a standard logical relation defined over terms which is
reflexive and transitive (we give a formal definition of the logical relation
used in this setting in \cref{sec:logrel}).

Returning to the example: instead of finding a reduction sequence from
$[v_1\;e_2]_k$ to $[v_1\;e_2']_k$, we show they are logically related.
The logical relation satisfies \emph{Reduce-App}: a $\beta$-redex is
logically related to its contractum (in either direction). In addition, 
$\alpha$-equivalent terms are logically related, so there is no need to 
reason explicitly about $\alpha$-equivalence.
Together with transitivity and the induction hypothesis, the chain
closes cleanly:
\begin{align*}
  [v_1\;e_2]_k
  &= (\lambda m.\;[e_2]_{\lambda n.\,m\,k\,n})\;\bar{v}_1 \\
  &\rele^i [e_2]_{\lambda n.\,\bar{v}_1\,k\,n}
    && \text{(Reduce-App)} \\
  &\rele^i [e_2']_{\lambda n.\,\bar{v}_1\,k\,n}
    && \text{(IH)} \\
  &\rele^i (\lambda m.\;[e_2']_{\lambda n.\,m\,k\,n})\;\bar{v}_1
    && \text{(Reduce-App)} \\
  &= [v_1\;e_2']_k
\end{align*}
Transitivity absorbs the administrative redexes on both sides.
There is no need to step backwards: the relation sees through the
administrative structure to the underlying semantics.

\section{Technical Setup}\label{sec:tech}
In this section, we provide the technical setup of our experiment, including the
source and target languages, their semantics, and the logical relation used in
the proof. These are all parts of the CertiCoq compiler.
For the full definitions of the semantics of the two languages, we refer the reader
to \citet{CPS:Paraskevopoulou21:compilingwithcontinuations}.

\subsection{Source Language}

The source language, \lsrc, is an untyped, call-by-value lambda calculus using
de~Bruijn indices. The grammar is shown in \cref{fig:src-grammar}.
Variables are natural numbers; the environment $\vec{v}$ is a
sequence of values, and the expression $n$ evaluates by looking up
the $n$-th entry.
The language is extended with let-bindings, fixpoints $\fun{e}$, constructor
application and pattern matching. 
Mutually recursive functions are not shown in the paper for simplicity but are
supported in the actual definitions.

\begin{figure}
\begin{subfigure}[t]{0.48\textwidth}
\input{src}
\caption{Source language \lsrc.}
\label{fig:src-grammar}\label{fig:src-grammar}
\end{subfigure}
\hfill
\begin{subfigure}[t]{0.48\textwidth}
\input{trg}
\caption{Target language \lanf.}
\label{fig:trg-grammar}\label{fig:trg-grammar}
\end{subfigure}
\caption{Language grammars.}
\label{fig:grammars}
\end{figure}

The semantics of \lsrc\ is given by a big-step evaluation relation written
 $\evalsrc{\vec{v}}{e}{f}{\res{v}}$. $\vec{v}$ is the evaluation environment,
 $e$ is the evaluated expression, $f$ is the fuel value which decreases with
 each step of evaluation, and $\res{v}$ is the result of the execution, which
 can be either a value $v$ or an out-of-time result $\oot$, indicating
 that the fuel has run out.

\subsection{Target Language}

The target language, \lanf, is a lambda calculus in administrative normal form,
with named variables.
In this representation, every intermediate result is bound to a name.
The grammar is shown in \Cref{fig:trg-grammar}.
The seven expression forms are: constructor allocation
$\letin{x}{\ctr(\vec{y})}{e}$, field projection
$\letin{x}{y\pri{i}}{e}$, function definition
$\letfunin{f~\vec{x}}{e_1}{e_2}$, non-tail call
$\letin{x}{f~\vec{y}}{e}$, case dispatch $\casew{y}{e}$,
tail call $f~\vec{x}$, and the return form $\halt{x}$.
Environments $\sigma$ are finite maps from variables to values;
values are constructor values $\ctr(\vec{v})$ or closures
$\clos{\sigma}{\letfun{f~\vec{x}}{e}}$.

The semantics of the target language is given by a big-step operational
semantics, written $\evaltrg{\sigma}{e}{f}{r}$, where $\sigma$ is the
evaluation environment, $e$ is the evaluated expression, $f$ is the fuel value,
and $r$ is the result of the execution, which can be either a value $v$ or
an out-of-time result $\oot$.

\paragraph{1-hole context.}
We define \emph{1-hole contexts} $\ctx$ as \lanf\ expression with a
hole $\bullet$ in tail-position:
\[
\begin{array}{r r l}
  \ctx & \bnfdef & \bullet \\
       & \bnfalt & \letin{x}{\ctr(\vec{y})}{\ctx} \\
       & \bnfalt & \letin{x}{y\pri{i}}{\ctx} \\
       & \bnfalt & \letin{x}{f~\vec{y}}{\ctx} \\
       & \bnfalt & \letfunin{f~\vec{x}}{e}{\ctx}
\end{array}
\]
We write $\ctx[\,e\,]$ for the expression obtained by replacing
$\bullet$ with $e$, and $\ctx_1 \comp \ctx_2$ for context
composition.
The ANF transformation produces pairs $(C, r)$---a context and a
result variable---such that the full target expression is
$C\,[\,\halt{r}\,]$.

\subsection{Logical Relation}
\label{sec:logrel}

The correctness proof uses step-indexed, untyped logical
relations.
This logical relation has been used to prove the correctness of all
\lanf-to-\lanf transformations in
CertiCoq~\cite{Proofs:Paraskevopoulou19:compositionalopt}, and in the correctess
proof of the CPS transformation (albeit in a different way).
The relation is defined mutually over values,
results, and expressions (\cref{fig:logrel}).

\begin{figure}
\input{logrel}
\caption{Logical relation over \lanf\ terms.}
\label{fig:logrel}
\end{figure}

Two values are related at step $k$, written $\valrelw{k}{v_1}{v_2}$,
as follows.
Two constructor values are related when their tags agree and their
fields are pointwise related.
Two closures $\clos{\sigma_1}{\letfun{f~\vec{x}}{e_1}}$ and
$\clos{\sigma_2}{\letfun{g~\vec{y}}{e_2}}$ are related when, for
every $i < k$ and every pair of argument vectors
$\valrelw{i}{\vec{v_1}}{\vec{v_2}}$, the two bodies yield
related expression configurations in their respective extended
environments.
The strict decrease from $k$ to $i < k$ at each function application
ensures that the definition is well-founded despite the absence of types.
The result relation $\resrelw{k}{r_1}{r_2}$ lifts value
relatedness: $\oot$ and $\oot$ are always related, $\res{v_1}$ and
$\res{v_2}$ are related when $v_1$ and $v_2$ are value-related, and
no other pairs are related.

The expression relation
$\exprelww{k}{(\sigma_1,e_1)}{(\sigma_2,e_2)}$ is the
heart of the definition: it asserts that for every evaluation of
$(\sigma_1, e_1)$ to a result $r_1$, there exists an evaluation of
$(\sigma_2, e_2)$ to a related result $r_2$.
In the full definition, the relation is additionally parameterized by
\emph{resource invariants} on evaluation costs, used to establish
\emph{divergence preservation} and \emph{asymptotic cost bounds}; we
refer
to~\cite{CPS:Paraskevopoulou21:compilingwithcontinuations,Proofs:Paraskevopoulou19:compositionalopt}
for details.

\section{The ANF Transformation and Its Correctness Proof}
\label{sec:anf-proof}
The ANF transformation proof is done in two layers: first, we define the ANF
transformation as a relation $\anfrel{S}{e}{\vec{x}}{C}{r}{S'}$ and
prove semantic preservation for this relation.
Then, we show that the executable ANF transformation, defined using a state
monad over a supply of fresh names, satisfies the relational specification.
This allows us to separate the monadic reasoning from the core simulation
argument, leading to a cleaner and more modular proof structure.
The proof of all \lanf transformations follows this pattern.

Unlike CPS, the ANF transformation does not introduce continuations.
Functions remain in direct style, and intermediate results are named
with let-bindings rather than being passed to continuations.
The ANF transformation is written $\anfrel{S}{e}{\vec{x}}{C}{r}{S'}$, where $S$
is a set from which fresh variables are drawn, $\vec{x}$ is a de Bruijn
environment mapping de Bruijn indices to variable names and $e$ is a source
expression. $C$ is a target 1-hole context and $r$: the context $C$ computes the
result of $e$ and binds it to $r$.
The full ANF expression corresponding to $e$ is $C[\,\halt{r}\,]$.
We explain some illustrative cases of the transformation below; the definition
is shown in \Cref{fig:anf-rel}.

\begin{figure}
\input{anf}
\caption{ANF transformation (relational definition, excerpt).}
\label{fig:anf-rel}
\end{figure}

\noindent\textbf{Variable.} A DeBruijn index $n$ is translated to the
corresponding variable in the environment $\vec{x}[n]$ and the empty context. 
%

\noindent\textbf{Lambda.}
A lambda $\fun{e}$ introduces a fresh function name $f$ and parameter
$x_1$, translates the body $e$ in environment $x_1 :: \vec{x}$ to
to obtain the context--variable pair $(C_1, r_1)$, and produces context
$\letfunin{f~x_1}{C_1[\,\halt{r_1}\,]}{\bullet}$ with result $f$.

\noindent\textbf{Application.}
An application $e_1\;e_2$ translates each subexpression, yielding
contexts $C_1, C_2$ and result variables $x_1, x_2$, and produces
context $C_1 \comp C_2 \comp \letin{r}{x_1~x_2}{\bullet}$ with $r$ 
holding the result.


%


\subsection{Proof Statements}
The main correctness theorem relates the source evaluation of $e$ to
the ANF-compiled target context $C$ via the logical relation.
The statement is adjusted to work with the big-step semantics
of CertiCoq.
It states that the source program $e$ evaluates to a value $v$ in environment
$\vec{v}$, and the ANF transformation produces context $C$ with result variable
$r$; then, for any continuation $e_k$, running $C$ followed by $e_k$ in the
target is logically related to running $e_k$ directly with $r$ bound to the
translated value $v'$.
We relate $v$ and $v'$ via the ANF value translation relation, written
$\lk{anf\_val\_rel}$. We lift this relation to environments with
$\lk{anf\_env\_rel}$. 



\begin{theorem}[ANF Correctness]\label{thm:anf-correct}
  Let $\vec{v}$ be a well-formed environment, $e$ a well-formed source
  expression, and $\evalsrcw{\vec{v}}{e}{v}$.
Suppose $\anfrel{S}{e}{\vec{x}}{C}{x}{S'}$.
Then for any target environment $\sigma$ such that $\lk{anf\_env\_rel}(\vec{x},
\vec{v}, \sigma)$, and target value such that $\lk{anf\_val\_rel}(v, v')$
we have 
      \[
        \forall~i,~\exprelww{i}
          {(e_k,\; \sigma[x \mapsto v'])}
          {(C[\,e_k\,],\; \sigma)}
      \]
where $e_k$ is any continuation expression such that $\fv{e_k} \cap ((S
\setminus S') \setminus \{x\}) = \emptyset$.


  
\end{theorem}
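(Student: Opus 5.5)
The plan is to prove the statement by induction on the derivation of the source evaluation $\evalsrcw{\vec{v}}{e}{v}$, generalized over $S$, $S'$, $\vec{x}$, $C$, $x$, $\sigma$, $v'$, $e_k$ and $i$. Within each case we invert the ANF relation $\anfrel{S}{e}{\vec{x}}{C}{x}{S'}$. The induction must be over the source derivation rather than the syntax of $e$, because applications and case analyses evaluate function bodies and branches that are not syntactic subterms of $e$. In the application case this requires a mutual statement with $\lk{anf\_val\_rel}$, so that for a related closure value we know its body is the ANF translation of the source body. Because the relation is step-indexed with $\forall i$, the closure case of the value relation is discharged by the same induction hypothesis applied at smaller indices.

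Before the main induction I would establish a toolbox of \lanf\ logical-relation lemmas in the style of \emph{Reduce-App} from \cref{sec:technique}. These are: (i) transitivity and reflexivity of $\mathbfcal{E}$; (ii) ``reduce'' lemmas for each context frame, e.g.\ $\exprelww{i}{(e,\sigma[y\mapsto \ctr(\vec{\sigma(z)})])}{(\letin{y}{\ctr(\vec{z})}{e},\sigma)}$, and similarly for projections, $\lk{fun}$ bindings and non-tail calls whose callee body is known to evaluate; (iii) a compatibility lemma for context composition, so that $C_1\comp C_2$ can be handled one piece at a time; and (iv) a weakening/frame lemma stating that extending $\sigma$ at variables not free in $e_k$ preserves relatedness. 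Lemma (iv) is exactly why the side condition $\fv{e_k}\cap((S\setminus S')\setminus\{x\})=\emptyset$ appears. We also need freshness invariants of the relation, namely $S'\subseteq S$, that $x\in S\setminus S'$ or $x\in\vec{x}$, and that the variables bound by $C$ lie in $S\setminus S'$. With these, $\lk{anf\_env\_rel}(\vec{x},\vec{v},\sigma)$ is preserved when $\sigma$ is extended at fresh names.

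The individual cases then proceed as follows.
\begin{itemize}
\item \textbf{Variable.} $C=\bullet$ and $\sigma(\vec{x}[n])$ is related to $\vec{v}[n]$. Conclude by reflexivity, using that $\lk{anf\_val\_rel}$ is deterministic up to the logical relation, which needs a small lemma.
\item \textbf{Lambda.} Apply the $\lk{fun}$ reduce lemma, then show that the created closure is $\lk{anf\_val\_rel}$-related to the source closure.
\item \textbf{Let and constructor.} Chain the induction hypotheses for subexpressions through transitivity. The continuation passed to the first IH is $C_2[\,e_k\,]$, whose free variables must avoid the names generated by the first translation; this follows from the freshness invariants.
\item \textbf{Application.} Use the IHs for $e_1$ and $e_2$ to get $\sigma$ extended with $x_1,x_2$. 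Unfold the closure to the translated body and use the IH for the body evaluation at the smaller index, through the closure clause of $\mathbfcal{V}$. Finally use a call-reduce lemma to relate $\letin{r}{x_1~x_2}{e_k}$ to $e_k$ with $r$ bound.
\item \textbf{Match.} Handled analogously, with the join point expressed through a $\lk{fun}$ or let binding of the result.
\end{itemize}

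The main obstacle is the application case, for two reasons. First, the direction of the relation matters: the statement has the target $C[\,e_k\,]$ on the right, so every target evaluation must be exhibited from an evaluation of $e_k$ in the extended environment, and the body evaluation inside the call has to be composed with it. Second, the closure environment captured at definition time must be shown $\lk{anf\_env\_rel}$-related to the source closure environment, which requires carefully stated value-relation invariants, including the recursive binding of $f$. I would first try to fix $\lk{anf\_val\_rel}$ so that it carries the relational ANF derivation of the body together with an environment relation, and prove the frame lemma (iv) generically. The remaining fuel and index bookkeeping (choosing $i$ versus $i-1$ and the resource invariants) I expect to be tedious but routine.
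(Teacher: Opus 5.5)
Your skeleton matches the paper's: induction on the source evaluation, a universally quantified continuation $e_k$, and chaining induction hypotheses by transitivity with $C_2[\,e_k\,]$ as the continuation for $e_1$. However, two load-bearing pieces are missing.

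\textbf{First gap: the ``small lemma'' in the Variable case.} The lemma is that two target values $\lk{anf\_val\_rel}$-related to the same source value are $\mathcal{V}^i$-related for every $i$. It is not small, and it cannot come from your main induction, which is the only mechanism you offer for closure clauses of $\mathcal{V}$. For closures, that clause quantifies over arbitrary $\mathcal{V}^j$-related target arguments, which need not be translations of any source value. It also quantifies over every run of the first body, which need not correspond to a terminating source evaluation. Your theorem only speaks about terminating source evaluations in $\lk{anf\_env\_rel}$-related environments, so its induction hypothesis says nothing there. What is needed is a separate alpha-equivalence theorem: two ANF translations of the same source term, produced from different name supplies and name environments $\vec{x}_1$, $\vec{x}_2$, are logically related whenever the target environments are pointwise $\mathcal{V}$-related along $\vec{x}_1$ and $\vec{x}_2$. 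It requires its own induction on the translation relation and the step index. In the paper it is a separate component of about 1,900 lines, and the main proof uses it beyond the Variable case, for instance for constructors.

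\textbf{Second gap: result variables are not always fresh.} The Variable rule returns $\vec{x}[n]$ with the empty context, so a result can lie in $\vec{x}$ and sibling results can coincide. Examples are $x_1 = x_2$ in a self-application such as $0\;0$, repeated names among the result variables of constructor arguments, or $x_1\in\vec{x}$ in a Let. In these cases ``$\sigma$ extended with $x_1,x_2$'' shadows. The constructor value built by reading back the argument variables is then not the tuple of target values you supplied. Moreover, $\lk{anf\_env\_rel}$ is not preserved by any fresh-name argument, so your frame lemma (iv) does not apply. This is exactly where the paper's proof needed additional ideas:
\begin{itemize}
\item a lemma, proved by mutual induction over expressions and expression lists, stating that if the result variable equals $\vec{x}[j]$ then the source value equals $\vec{v}[j]$;
\item a case split on whether the variables coincide;
\item the alpha-equivalence theorem, to show that target values at coinciding positions are $\mathcal{V}$-related, so that the values read back are related to the required ones.
\end{itemize}

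Relatedly, to instantiate the induction hypotheses at intermediate results you need an existence lemma producing some $v_1'$ with $\lk{anf\_val\_rel}(v_1, v_1')$ for every well-formed source value; this is where the well-formedness hypotheses enter. The constructor case also needs mutual induction with the evaluation of expression lists. By contrast, no mutual statement about $\lk{anf\_val\_rel}$ is needed, since its closure clause already records the translation of the body.
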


%
The universally quantified continuation $e_k$ is what makes the statement
compositional: it allows the theorem to be applied inductively in larger
contexts, as illustrated by the Let case below.

\paragraph{Example: the Let case.}
To illustrate the proof structure, consider the \textbf{Let} case.
Suppose $\evalsrcw{\vec{v}}{\lets~e_1~\ins~e_2}{r}$, which decomposes
as $\evalsrcw{\vec{v}}{e_1}{v_1}$ and $\evalsrcw{v_1 :: \vec{v}}{e_2}{r}$.
The ANF transformation yields contexts $C_1$, $C_2$ and result
variables $x_1$, $x_2$ such that $\anfrel{S}{e_1}{\vec{x}}{C_1}{x_1}{S_2}$
and $\anfrel{S_2}{e_2}{x_1 :: \vec{x}}{C_2}{x_2}{S'}$, with
output context $C_1 \comp C_2$ and result $x_2$.
We must show:
\[
  \exprelww{i}{(e_k,\; \sigma[x_2 \mapsto v_2'])}{((C_1 \comp C_2)[\,e_k\,],\; \sigma)}
\]
The induction hypotheses for the two sub-evaluations give:
\begin{align*}
  \exprelww{i}{(e_k,\; \sigma[x_1 \mapsto v_1'][x_2 \mapsto v_2'])}
    {(C_2[\,e_k\,],\; \sigma[x_1 \mapsto v_1'])}
    && \text{(IH for $e_2$, continuation $e_k$)} \\
    \exprelww{i}{(C_2[\,e_k\,],\; \sigma[x_1 \mapsto v_1'])}{(C_1[\,C_2[\,e_k\,]\,],\; \sigma)}
    && \text{(IH for $e_1$, continuation $C_2[\,e_k\,]$)}
\end{align*}
Transitivity closes the chain.

\paragraph{Top-level correctness.}
Following \citet{CPS:Paraskevopoulou21:compilingwithcontinuations}, from \cref{thm:anf-correct} we can derive a
whole-program correctness corollary.
We first define an observation relation $\obs$ between source and
target values:
\[\begin{array}{l}
  \ctr(\vec{v}) \obs \ctr(\vec{v}') \;\defeq\; \forall\,i,\; v_i \obs v_i' \\
  \clossrc{\vec{v}}{\fun{e}} \obs \clostrg{\sigma}{\letfun{f~x}{e'}} \;\defeq\; \True
\end{array}\]
Constructor values are related when they share the same outermost
constructor and have pairwise related arguments; closures are
trivially related.
We then define whole-program refinement $e_{\src} \sqsubseteq e_{\trg}$:
\[
  e_{\src} \sqsubseteq e_{\trg} \;\defeq\;
  \Bigl(\forall\,v,\;
    \evalsrcw{[]}{e_{\src}}{v} \Rightarrow
    \exists\,v',\;
    \evaltrgw{\emptyset}{e_{\trg}}{v'} \wedge v \obs v'
  \Bigr)
\]

\begin{corollary}[Whole-Program ANF Correctness]\label{cor:anf-whole} If $e$ is
  well-formed, then there exist $C$ and $x$ such that
  $\anfrel{S}{e}{[]}{C}{x}{S'}$, and
  \[
    e \;\sqsubseteq\; C[\,\halt{x}\,]
  \]
  where $C[\,\halt{x}\,]$ is evaluated from the empty target environment.
\end{corollary}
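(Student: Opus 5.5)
The plan is to derive the corollary from \cref{thm:anf-correct} by instantiating the continuation with $e_k = \halt{x}$, starting from the empty environment, and then extracting an observation result from the logical relation.

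\textbf{Step 1: existence of a translation.} Since $e$ is well-formed (closed with respect to $[]$), the executable ANF transformation, run from a fresh-name supply $S$, succeeds. By the soundness of the executable transformation with respect to the relational specification (the two-layer structure described at the start of \cref{sec:anf-proof}), its output satisfies $\anfrel{S}{e}{[]}{C}{x}{S'}$. If this lemma is not available directly, I would prove totality of the relation by induction on $e$, drawing fresh names from $S$, which I assume is infinite.

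\textbf{Step 2: instantiating the theorem.} Assume $\evalsrcw{[]}{e}{v}$, and choose a target value $v'$ with $\lk{anf\_val\_rel}(v,v')$. I expect this to follow from a lemma stating that every well-formed source value has an ANF value translation. Such a lemma would be proved by induction on the value: closures are translated by running the transformation on their bodies, and well-formedness of values is preserved by evaluation. The relation $\lk{anf\_env\_rel}([],[],\emptyset)$ holds trivially. Take $e_k = \halt{x}$. Its free variables are $\{x\}$, so the side condition $\fv{e_k}\cap((S\setminus S')\setminus\{x\})=\emptyset$ holds. \Cref{thm:anf-correct} then gives, for every $i$,
\[
\exprelww{i}{(\halt{x},\,\emptyset[x\mapsto v'])}{(C[\,\halt{x}\,],\,\emptyset)}.
\]

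\textbf{Step 3: extracting the result.} The left configuration evaluates with fuel $1$ to $\res{v'}$. Unfolding the expression relation at an index $i$ larger than the cost yields a target evaluation of $C[\,\halt{x}\,]$ from $\emptyset$ to some $r_2$ with $\resrelw{i'}{\res{v'}}{r_2}$. By the definition of $\resrelw{}{}{}$, this forces $r_2 = \res{v''}$ with $\valrelw{i'}{v'}{v''}$, which gives $\evaltrgw{\emptyset}{C[\,\halt{x}\,]}{v''}$.

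\textbf{Step 4: observations.} It remains to show $v \obs v''$. I would prove two lemmas by induction on the source value:
\begin{itemize}
\item $\lk{anf\_val\_rel}(v,v')$ implies $v \obs v'$;
\item $\valrelw{k}{v'}{v''}$ together with $v\obs v'$ implies $v \obs v''$.
\end{itemize}
Both are straightforward. Closures are trivially related, and constructors share their tag and relate their fields pointwise, so the second lemma holds for any $k$ and there is no step-index decrease at constructor fields.

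\textbf{Main obstacle.} I expect the main difficulty to be Step 3. The expression relation carries resource invariants and step indices, so I must pick $i$ large enough relative to the fuel of the trivial left evaluation. I must also check that the invariants do not allow the target to return $\oot$ when the source does not. Here the $\res{\cdot}$/$\oot$ case split in $\resrelw{}{}{}$ does the work. A secondary difficulty is the existence of $v'$ in Step 2, if $\lk{anf\_val\_rel}$ is defined relationally on closure bodies.
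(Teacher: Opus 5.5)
Your proposal is correct and takes essentially the paper's route: instantiate \cref{thm:anf-correct} from the empty environment with $e_k = \halt{x}$, obtain $v'$ from the value-translation existence lemma (the development's $\lk{anf\_val\_rel\_exists}$), extract a target evaluation to a related result from the expression relation, and transfer $\obs$ along $\lk{anf\_val\_rel}$ and then value relatedness. Obtaining the translation from the monadic/relational correspondence also matches the paper's two-layer structure.
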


A similar corollary is derived for separate compilation, which we elide for brevity.

\subsection{Divergence Preservation}\label{sec:divergence-preservation} The
above theorem only establishes semantic preservation 
for terminating programs.
The CPS correctness proof, also establishes divergence preservation.
This is done by additionally proving than wen the source program runs out of
fuel, $\evalsrc{\vec{v}}{e_{\src}}{f_{\src}}{\oot}$, then so does the target,
$\evaltrg{\sigma}{C[\,e_k\,]}{f_{\trg}}{\oot}$, for some $f_{\trg} \geq
f_{\src}$.
This is a standard technique for proving divergence preservation with fuel-based
big-step semantics, used, among other by the CakeML
compiler~\cite{Cakeml:Owens17:veryfingefficient}.

However, this technique fails when the transformation can decrease the number of
steps taken by a program, as is the case for ANF.
Consider the source program that applies a variable to another $x_1~x_2$ and its
ANF translation $\letin{r}{x_1~x_2}{\halt{r}}$.
With the existing cost models of the source and target languages, the source
program evaluates in three steps (one for the rule of application and two for
the rules of  variable lookups), while the target program evaluates in only two
steps (one for the application and one for the variable lookup).
We believe that is possible to adjust the cost models to make the proof go
through, but we leave it as future work.
The LLM correctly identified this issue generating a counterexample, but was not able to fix it.
The CakeML compiler addresses this by inserting ``tick'' instructions that
consume fuel without doing work.
\citeN{Proofs:Paraskevopoulou19:compositionalopt} propose an alternative
technique based on \emph{resource invariants}.

\subsection{Proof Structure}\label{sec:proof-structure} The ANF correctness
proof is organized into four parts, summarized along with their proof sizes in
\Cref{tab:sizes}. The ANF proof is roughly 47\% larger than the CPS proof, with
the bulk of the increase in the semantic preservation component.


\begin{table}[h]
\centering
\begin{tabular}{@{}lp{7cm}rr@{}}
\toprule
Part & Description & CPS & ANF \\
\midrule
Semantic preservation & Main simulation theorem & 2,243 & 4,755 \\
Alpha-equivalence & ANF results from different name supplies are logically related & 1,860 & 1,914 \\
Monadic/relational & Monadic implementation satisfies relational spec & 763 & 864 \\
Top-level theorems & Whole-program \& separate compilation & 428 & 250 \\
\midrule
\textbf{Total} & & \textbf{5,294} & \textbf{7,783} \\
\bottomrule
\end{tabular}
\caption{Proof structure and sizes in lines of Rocq.}
\label{tab:sizes}
\end{table}

\section{Experience Report: Proof Development with an LLM}\label{sec:experience}
The proof was developed using Claude Code~\cite{claude-code}, Anthropic's
agentic coding assistant for software development, running as a VSCode extension
with the Claude Opus~4.6 model.
Claude Code is capable of reading and editing files, searching the codebase,
running build commands, and carrying out multi-step tasks with access to the
full development environment.
In our setting, it had access to the entire CertiCoq repository, including the
completed CPS proof files, and was able to invoke \texttt{Rocqc} to type-check
generated proofs.
We started the proof development by tackling the main semantic preservation
theorem, then moved to the other parts of the proof, building up to 
the top-level theorems.
The human role was to provide high-level guidance: structuring the proof
development into phases, explaining the high-level correspondence between the
CPS and ANF proofs, and providing natural-language feedback on certain proof
attempts.
%
Full transcripts of the interaction have been preserved alongside the
proof development.

\subsection{Development Process}\label{sec:process}

We initially instructed the LLM to complete an existing partial
proof (approximately 1,200 lines) that had been abandoned months
earlier due to complexity.
The LLM made some progress but eventually stalled, cycling through
similar unsuccessful attempts.
We then started from scratch on a new branch, instructing the LLM to
follow the proof technique and structure of the CPS proof, adapting the
reasoning to the ANF setting.

\paragraph{Phase 1: Theorem statements and proof skeleton.}
The LLM began by generating theorem statements for the four mutually
inductive proof obligations and then laid out the proof structure,
closely following the organization of the CPS proof files.
The initial correctness statement it generated lacked the universally
quantified continuation~$e_k$ (\cref{thm:anf-correct}), which is
essential for compositionality. The LLM corrected this after we pointed it out.
%
%
It then rapidly dispatched the trivial cases---the main correctness
theorem has thirteen cases---before moving on to the substantive ones.

\paragraph{Phase 2: Case-by-case development.}
The LLM worked through the remaining cases one at a time, partially
filling each case, generating \texttt{admit}s (often with explanatory
comments describing what remained to be proved), and stating helper
lemmas as needed.
We guided it on which case to tackle next, and it would produce short work plans
before starting.
Once all cases were sketched out with \texttt{admit}s, we prompted it
to first resolve the inner admits (within individual cases) before
turning to the outer helper lemmas.
This ordering worked well: after the inner admits were filled, the
outer lemmas were discharged with remarkable success.

\paragraph{Phase 3: Auxiliary proof files.}
After the main correctness proof was complete, we directed the LLM to
the three remaining proof files.
The \emph{alpha-equivalence utilities}, showing that the transformation yields
logically related results when invoked with different fresh-name supplies, went
smoothly, closely following the CPS template.
The \emph{monadic/relational correspondence} was more disappointing:
despite being conceptually the simplest of the four files (a purely
syntactic argument) and the most similar to the CPS counterpart, the
LLM struggled.
Rather than adapting the proof, it copied the CPS proof wholesale and attempted
to fix type errors one by one from error messages, leading to many slow
iterations. Ultimately, the author manually fixed two tactic invocations in the
last ten lines of the proof script to explicitly unify a variable that was
blocking the remaining goals.
The \emph{top-level theorems} were straightforward.
During this phase, the LLM discovered that the standard technique for divergence
preservation does not apply to the ANF transformation
(\cref{sec:divergence-preservation}), generating a high-level readable comment
next to the blocked \texttt{admit}, explaining why.

\paragraph{Autonomous operation.}
During the process, the LLM was frequently left to work unattended,
sometimes for hours, and continued to make progress independently, often
completing entire proof cases end-to-end without guidance.
Interface issues hindered this: at times, the tool would repeatedly
request permission for routine operations (compiling, grepping) despite
permission having already been granted, blocking progress until a human
intervened.

\subsection{What the LLM Did Well}\label{sec:llm-strengths}

\paragraph{Proof structure and adaptation.}
The LLM was remarkably effective at adapting the CPS proof structure to
the ANF setting.
It understood the high-level proof strategy and was able to instantiate
it in a setting with different technical details.

\paragraph{Proof reasoning.}
For each case, the LLM would first reason about why the case should
hold, often producing a natural-language train of thoughts, and
then write the corresponding proof script.
When the resulting script contained no remaining \texttt{admit}s, it
was typically logically correct and required only syntactic
adjustments (tactic names, hypothesis names, goal ordering) to compile.

\paragraph{Routine proof work.}
The LLM was proficient at the tedious but necessary parts of
mechanized proving: setting up inductions, performing case analyses,
applying known lemmas from the codebase, and managing proof subgoals.
It also independently generated helper lemmas and stated them with
appropriate generality.

\paragraph{Code quality.}
The generated proof code was well-structured, with clear naming
conventions, informative comments, and a logical file organization
consistent with the existing codebase.

\subsection{Challenges and Limitations}\label{sec:challenges}

\paragraph{Silently weakening proof statements.}
The most concerning behavior we observed was the LLM silently modifying
proof statements to make them easier to prove.
When it encountered the divergence preservation obstacle
(\cref{sec:divergence-preservation}), rather than reporting the
difficulty, it removed the fuel bound from the statement and filled the
fuel obligation with a trivial zero value.
This was caught when proving the top-level theorem, but it highlights a risk: the
machine-checked guarantee holds only for the statement that was
actually proved, and if that statement has been weakened, the guarantee
may be vacuous.
Careful human review of both proof statements and proof bodies remains
essential.

\paragraph{Incorrect assumptions about variable freshness.}
The two last admits in the body of the semantics preservation proof turned out
to encode a false assumption: the LLM assumed that certain result variables in
the let-application case could not coincide, so that no shadowing occurs.
In fact, the variables \emph{can} coincide, requiring a case split and
additional reasoning that if they are the same, the contexts evaluate
consistently.
The LLM understood the issue and it was able to fix the proof.
%

\paragraph{Iterative repair.}
The interactive repair process once the proof scripts were generated
was quite inefficient. 
The LLM frequently encountered issues with hypothesis names (which change as the
proof context evolves), and occasionally with goal numbering, leading to
unnecessary repair cycles.
This iterative repair process was slow and often circular (fixing one
error introduced another).
The LLM interacted with Rocq exclusively through file editing and the
\texttt{Rocqc} compiler, with no access to the interactive proof state.
Each fix required a full re-compilation, which was time-consuming. 


\paragraph{Context window limitations.}
The LLM's context window filled frequently during long proof sessions,
triggering automatic context compaction.
After compaction, it would sometimes lose track of the current proof state and
need to re-orient itself by re-reading files and re-grepping the codebase, work
it had already done in the compacted portion of the conversation.
This repeated exploration was time-consuming and suggests that better
mechanisms for persistent proof state across context boundaries would
be valuable.



\paragraph{Attention and focus.}
The LLM's focus could be narrow and unpredictable: it would sometimes
jump between distant parts of a proof or preferentially work on easier
cases while avoiding difficult admits.
Re-implementing lemmas that already existed in the codebase (but were
not imported in the current file) was another recurring issue.

\subsection{Comparison with the CPS Proof}\label{sec:comparison}

The CPS correctness proof was developed manually by human experts over several
months, requiring both conceptual innovation (the proof technique combining
syntactic simulations with step-indexed logical relations) and substantial
mechanical effort (formalizing hundreds of lemmas in Rocq).
The ANF proof, by contrast, was developed in approximately 96~hours of
LLM-assisted work, with no manual proof writing and mechanized a known proof
strategy.
The human contribution consisted of high-level guidance and review, and
occasional natural-language feedback on reasoning steps.
Human effort was not systematically measured; we supervised the process
intermittently while engaged in other work and did not keep a detailed
log of hours spent.

As shown in \cref{tab:sizes}, the ANF proof is larger than its CPS
counterpart (7,783 vs.\ 5,294 total lines), with the main correctness
file roughly twice the size (4,755 vs.\ 2,243 lines).
The ANF proof is in some respects more complex than the CPS proof because of the
context handling.
In other respects it is simpler: there are no administrative reductions
to reason through.
This also suggests that the LLM-generated proof is more verbose than the
human-written one. 



\section{Discussion}\label{sec:discussion}

\paragraph{A template-based paradigm for compiler verification.}
Our experience points toward a new workflow for verified compiler
development: a human expert designs the proof strategy and builds the
verification infrastructure for one transformation, and an LLM
mechanizes subsequent applications of the same technique to related
transformations.
CertiCoq's pipeline is a natural candidate for this approach: it uses
the same logical relation framework to verify a series of \lanf-to-\lanf\
transformations~\cite{Proofs:Paraskevopoulou19:compositionalopt}, each
following a similar proof pattern.
We anticipate that LLMs will prove effective at adapting proofs across
this family of transformations, substantially reducing the effort
required to extend or modify verified compiler pipelines.


\paragraph{Toward verified code generation.}
If LLMs can generate both code and proofs together, verified software
could become more of a norm than an exception.
Rather than a single verified compiler pipeline built with enormous
human effort, it might become feasible to generate new verified
transformations on demand.
Our case study suggests that this vision is within reach for
transformations that follow known proof templates.

\paragraph{Tool support for LLM-assisted proving.}
We identified several concrete gaps in tool support:

\noindent\textbf{Proof state access.}  The LLM interacted with Rocq
    only through file editing and \texttt{Rocqc} invocations.
    Access to the current proof state (goal, hypotheses, available lemmas) could
    dramatically reduce the time spent on syntactic repair.

\noindent\textbf{Structured error feedback.}  \texttt{Rocqc} error
    messages are designed for human experts.
    Structured, machine-readable error output may help LLMs diagnose
    and fix proof failures more efficiently.

\noindent\textbf{Persistent state across context windows.}
    Mechanisms for summarizing and preserving proof progress across
    context compaction boundaries would reduce the repeated exploration
    we observed.

\paragraph{Proof scripts vs.\ proof terms.}
A natural question is whether LLMs would be better suited to
generating proof terms (fully explicit proof objects) rather than
proof scripts (sequences of tactics).
Proof terms avoid the fragility of tactic scripts, where hypothesis
names and goal numbering can silently change.
On the other hand, proof terms are harder to diagnose and repair interactively.
We leave an empirical comparison as future work, but note that many of
the iterative repair difficulties we encountered are artifacts of the
tactic-based workflow.

\paragraph{The changing role of proof automation.}
If LLMs can reliably generate and repair proof scripts, the role of
custom tactic automation in proof developments may shift.
Traditionally, proof engineers invest significant effort in writing
custom tactics and decision procedures to reduce proof burden.
In an LLM-assisted workflow, the emphasis may shift from writing
automation to writing clear, well-structured proof templates that the
LLM can follow and adapt.

\paragraph{Generalizability.}
The success of our experiment depends on several preconditions: the
existence of a closely related, complete proof serving as a template; a
domain expert able to identify correspondences and provide targeted
guidance; and a sufficiently capable LLM.
How well this approach generalizes to settings where one or more of
these preconditions are absent, \eg., proofs without a close
template, or domains where the human lacks deep expertise, is an
important open question.

\paragraph{Reproducibility and attribution.}
LLMs are inherently stochastic and the proof generation in our experiment cannot
be reproduced exactly.
We mitigate this by making the complete proof and its commit history
publicly available, providing evidence that the proof was developed in
a timeframe consistent with LLM-assisted development.
Authorship is also hard to attribute, as there is no reliable 
way to determine if the code was LLM generated.


\paragraph{Education and the future of proof expertise.}
If mechanized proofs can be generated by LLMs, researchers will focus more on
designing proof strategies and less on the mechanics of tactic-based proving.
At the same time, the ability to evaluate, guide, and debug
LLM-generated proofs requires a level of expertise that is difficult to
acquire without hands-on proof experience, suggesting that proof
education remains important, even as its focus shifts.

\paragraph{Reliance on LLM providers.}
Our experiment relied on a commercial LLM.
This creates a dependency on the continued availability, affordability, and
capability of commercial services.
Open-source models, while currently less capable for this task, are an
important direction for the community to develop, both for
accessibility and for scientific reproducibility.

\section{Related Work}
\label{sec:related}

\paragraph{Agentic proof development.}
There are early reports on the successful use of agentic proof development. 
\citeN{swamy2026agenticpop} describes the process of using F* and LLMs to
produce 10,000 lines of code and proof, spanning a range of examples. 
\citeN{demoura2026proofassistants} similarly argues that proof
assistants and AI are entering a new symbiotic era, and identifies
machine-checked proof generation as a key near-term opportunity.
Our work is an instance of this paradigm applied to compiler
verification.

\paragraph{LLM-assisted proof generation.}
Several recent papers have explored the use of LLMs for automated proof
generation.
\citeN{fscq-case-study} study LLM proof generation for FSCQ, a verified file
system in Rocq, achieving 38\% proof coverage on preexisting theorems sampled
from the FSCQ codebase.
Our experience suggests that higher coverage is achievable when a closely
related proof template is available, even in end-to-end proof development and
not only in standalone lemma proving.
%

PALM~\cite{proof-automation-llms} analyzes common error patterns in
LLM-generated proofs to design a generate-then-repair approach that 
corrects low-level errors in LLM generated proofs.
Using GPT-3.5 it achieves 40\% coverage on a dataset of 10K+ Rocq theorems,
outperforming previous methods by over 76.6\%.
CoqPilot~\cite{coqpilot-llm-proof-generation} is a VSCode extension that uses
LLMs to fill proof holes in Rocq files.
The tool was evaluated on a dataset of 300 Rocq theorems with proofs of at most
20 tactics with different LLMs, achieving a maximum coverage of 50\% when using
GPT-4.
On a similar track, Lean Copilot~\cite{leancopilot2025} is a native LLM interface
for Lean proof assistant~\cite{demoura2021lean4}, that assists with automating
parts of proof development.
Rango~\cite{Thompson:2025:Rango} is a synthesis tool for Rocq that, at each
proof step, retrieves relevant premises and similar proofs to guide a fine-tuned
LLM. On CoqStoq, a new benchmark of 196,929 theorems from 2,226 open-source
projects, it proves 32.0\% of theorems (29\% more than the state-of-the-art
Tactician~\cite{Blaauwbroek:2020:Tactician}).
Our work differs in scale: these tools focus on automatically proving many
smaller independent lemmas, while we generate a single large, coherent proof,
under user guidance.
Most prior work evaluates proof generation on existing proof corpora; we target
previously unproven statements.
Our experience supports that integrating such tools in the agentic proving
environment could further speed up the process of mechanized proof development.

\citeN{misu2024dafny} explore LLM-assisted synthesis of verified Dafny methods
by prompting models to generate verified Danfy programs, achieving as much as
58\% coverage on the MBPP dataset of Python programs~\cite{austin2021mbpp}. 
Baldur~\cite{baldur-2023} demonstrates whole-proof generation using Minerva
LLM~\cite{lewkowycz2022minerva} fine-tuned on Isabelle/HOL proofs, using
a repair step to fix low-level errors.

\paragraph{AI for mechanized mathematics.}
The use of AI to generate machine-checked proofs for mathematical theorems
is gaining increasing attention.
Remarkably, DeepMind's AlphaProof~\cite{alphaproof2025}, a formal proof capable
of formalizing mathematics in Lean using reinforcement learning, solved 4 out of
the 6 problems in 2024 International Mathematical Olympiad.

\section{Conclusion}
\label{sec:conclusion}

We have presented a case study in LLM-assisted mechanized proof
development.
Using an agentic LLM coding assistant and high-level human guidance, we have
developed approximately 6,000 lines of Rocq proof for the correctness of the ANF
transformation in CertiCoq in a matter of days, compared to the months required
for the closely related CPS proof developed manually by human experts.
Our experience suggests that LLMs can be highly effective at
adapting known proof techniques to new but related settings, with
human guidance at key junctures.
We believe that LLM-assisted proof development may suggest a new paradigm of
mechanized proof construction, and we offer this case study as a concrete data
point for further investigation.




\begin{acks}
\end{acks}

\bibliographystyle{ACM-Reference-Format}

\bibliography{GC,cakeML,closure,resource,scheme,compcert,compcorr,compilers,cps,theses,logrel,coq,proofs,online,general,space,lisp,sepcomp,types,llm}

\appendix

\section{Proof Development Statistics}\label{app:statistics}

We report statistics on the proof development process, extracted from the
preserved conversation transcripts (JSONL logs) of all Claude Code sessions.
This was the author's first experience using an LLM for mechanized proof
development and the first time using Claude Code. As a result, no systematic
plan for structuring the interaction or measuring progress was established in
advance.
The development was tracked through 80 manual git commits.
The proof was developed across two Claude Code sessions: the author started in
one session on Day~0, then accidentally switched to a new session the next day
and continued there for the remainder of the development.
The statistics reported below are aggregated across both sessions.
The author subscribed to the Claude Max~20x plan; the weekly usage limit
was sufficient for the entire proof development.

\paragraph{Overall statistics.}
\Cref{tab:overall} summarizes the aggregate tool usage across all sessions.
The development spanned five consecutive days with 90.2
wall-clock hours (elapsed time from first to last event each day) and 62.6
active hours (an approximation of actual running time, obtained by excluding
gaps of more than 30 minutes with no recorded event).
The human provided 114 prompts in total, while the LLM performed
6,538 tool invocations.
Context compaction occurred 101 times across the five days, reflecting the long-running
nature of the sessions.

\begin{table}[h]
\centering
\begin{tabular}{@{}lr@{}}
\toprule
Metric & Count \\
\midrule
User prompts (human) & 114 \\
Context compactions & 101 \\
\lk{Rocqc}/\lk{make} invocations & 872 \\
Grep/search calls & 1,684 \\
Edit/Write calls & 1,209 \\
Read calls & 2,383 \\
Bash calls & 1,035 \\
\midrule
\textbf{Total tool uses} & \textbf{6,538} \\
\midrule
Wall-clock hours & 90.2 \\
Active hours & 62.6 \\
\bottomrule
\end{tabular}
\caption{Aggregate LLM tool usage across all sessions.}
\label{tab:overall}
\end{table}

\paragraph{Per-day and per-file activity.}
\Cref{tab:perday} breaks down the development activity by day.
%
%
Day~0 was essentially a fully autonomous run. The single initial prompt
(\cref{app:prompts}) was followed by roughly 4.3 active hours in which the LLM
explored the codebase, familiarized itself with the proof structure, and began
working on the semantic preservation proof.
The bulk of the semantic preservation proof was developed on Days~1-2, with the
LLM making steady progress on the main proof and the generated helper lemmas. 
Days~3-4 were devoted to the $\alpha$-equivalence proof, the monadic/relational
correspondence, and the top-level theorems. 
Day~4 has the highest number of human prompts (50). The majority of these
prompts came after the completion of the top-level corollaries, when the user
and LLM encountered a block on the divergence preservation proof.
The user prompted the LLM explain the problem with the proof and provide a counterexample. 
Then, the user prompted the LLM to walk through specific proof cases step by
step, in an attempt to understand and fix the source of the issue.

\begin{table}[h]
\centering
\small
\begin{tabular}{@{}lrrrrrrrrr@{}}
\toprule
Day & Prompts & Compact & \lk{Rocqc} & Grep & Edit/Wr & Read & Bash & Wall\,(h) & Active\,(h) \\
\midrule
Day 0 &  1 &  9 &   72 &  150 &  111 &  212 &   80 &  4.3 &  4.3 \\
Day 1 & 18 & 21 &  211 &  324 &  302 &  589 &  226 & 16.2 & 12.6 \\
Day 2 & 22 & 22 &  204 &  456 &  323 &  610 &  255 & 24.0 & 14.9 \\
Day 3 & 23 & 34 &  208 &  466 &  270 &  661 &  291 & 24.0 & 18.3 \\
Day 4 & 50 & 15 &  177 &  288 &  203 &  311 &  183 & 21.7 & 12.5 \\
\midrule
\textbf{Total} & \textbf{114} & \textbf{101} & \textbf{872} & \textbf{1,684} & \textbf{1,209} & \textbf{2,383} & \textbf{1,035} & \textbf{90.2} & \textbf{62.6} \\
\bottomrule
\end{tabular}
\caption{Per-day development activity.}
\label{tab:perday}
\end{table}

\paragraph{Per-file activity.}
\Cref{tab:perfile} shows the tool usage broken down by proof file and day.
The semantic preservation file received the most attention by far,
dominating Days~0-2 and requiring 515 compilation attempts over the
five days.
The alpha-equivalence utilities and the monadic/relational correspondence were
developed primarily on Days~3 and~4, with the latter requiring 132 compilations
despite being a conceptually simpler and smaller proof (\cref{sec:challenges}).
The top-level theorems were the lightest component, completed entirely
on Day~4.
The per-file totals do not sum to the overall totals in \cref{tab:perday},
as the LLM also read and searched other files in the codebase for reference
lemmas and definitions.

\begin{table}[h]
\centering
\small
\begin{tabular}{@{}llrrrrr@{}}
\toprule
Proof file & Day & Prompts & \lk{Rocqc} & Grep & Edit/Wr & Read \\
\midrule
Semantic preservation & 0 &  0 &  66 &  49 & 107 &   100 \\
                      & 1 & 18 & 209 & 142 & 302 &   493 \\
                      & 2 & 18 & 199 & 196 & 320 &   536 \\
                      & 3 &  1 &  41 &  33 &  48 &    74 \\
                      & 4 & 22 &   0 &  32 &   2 &    50 \\
\cmidrule(l){2-7}
                      & \textbf{Total} & \textbf{59} & \textbf{515} & \textbf{452} & \textbf{779} & \textbf{1,253} \\
\midrule
Alpha-equivalence     & 3 & 13 & 141 & 79 & 208 & 385 \\
                      & 4 &  0 &   0 & 12 &   0 &  12 \\
\cmidrule(l){2-7}
                      & \textbf{Total} & \textbf{13} & \textbf{141} & \textbf{91} & \textbf{208} & \textbf{397} \\
\midrule
Monadic/relational   & 3 &  1 &   1 &  0 &   1 &    4 \\
                      & 4 & 14 & 131 & 17 & 164 &  145 \\
\cmidrule(l){2-7}
                      & \textbf{Total} & \textbf{15} & \textbf{132} & \textbf{17} & \textbf{165} & \textbf{149} \\
\midrule
Top-level theorems    & 4 &  6 &  28 &  0 &  36 &   19 \\
\cmidrule(l){2-7}
                      & \textbf{Total} & \textbf{6} & \textbf{28} & \textbf{0} & \textbf{36} & \textbf{19} \\
\bottomrule
\end{tabular}
\caption{Tool usage per proof file and day.}
\label{tab:perfile}
\end{table}

\newcommand{\lkb}[1]{%
  \texttt{%
    \fontdimen2\font=2pt%
    \hyphenpenalty=0\exhyphenpenalty=0\relax
    #1}}

\section{Sample User Prompts}\label{app:prompts}

To illustrate the nature of the human--LLM interaction, we present a curated
selection of user prompts drawn from the conversation transcripts.
The prompts are lightly edited for readability (fixing typos) but are
otherwise verbatim.
They are grouped by the role they played in the development process.

\paragraph{Initial task framing.}
The very first prompt set up the overall goal:
\begin{quote}\small
\emph{``The file \lkb{Lambda\-Box\-Local\_\allowbreak to\_\allowbreak Lambda\-ANF.v} contains definitions for CPS
and ANF transformations. The correctness of the CPS transformation has been
proved in the files \lkb{Lambda\-Box\-Local\_\allowbreak to\_\allowbreak Lambda\-ANF\_\allowbreak correct.v} and
\lkb{Lambda\-Box\-Local\_\allowbreak to\_\allowbreak Lambda\-ANF\_\allowbreak corresp.v}; the correctness of ANF has not
yet been proved. Can you apply a similar proof technique to prove ANF correct?
The code and proofs that you generate should compile.''}
\end{quote}

\paragraph{Directing which task to work on.}
A large fraction of prompts simply steered the LLM to the next task:
\begin{quote}\small
\emph{``Proceed with the rest of the admits. Note that: for now, ignore
\lkb{preord\_\allowbreak exp\_\allowbreak Ecase\_\allowbreak red}; \lkb{anf\_\allowbreak cvt\_\allowbreak val\_\allowbreak alpha\_\allowbreak equiv} and
\lkb{anf\_\allowbreak val\_\allowbreak rel\_\allowbreak exists} will be the last ones to prove.''}

\medskip
\emph{``Amazing job. Now I want you to work to remove all admits in the main
proof. Be careful when introducing helper lemmas, you should be 100\% sure that
they are provable. After all inner admits are gone we will work on proving all
the helper lemmas.''}

\medskip
\emph{``Next I want you to tackle the lemmas in the family
\lkb{anf\_\allowbreak cvt\_\allowbreak disjoint\_*}.''}

\medskip
\emph{``Now the final big proof, before we go to top-level lemmas. The
correspondence of the relational and monadic transformations. This should be
done as per CPS; the proof for CPS is in file
\lkb{Lambda\-Box\-Local\_\allowbreak to\_\allowbreak Lambda\-ANF\_\allowbreak corresp}. Create a similar file for ANF
and do the proof.''}
\end{quote}

Looking at the last prompt, the fact that the LLM copied the CPS correspondence
proof could be due to the prompt not explicitly asking to adapt the proof, but
rather to create a similar file. 

\paragraph{Mathematical guidance.}
In few critical cases (for example, the constructor case of the semantic
preservation proof), the human provided detailed the human provided proof ideas
in natural language. The most significant ones are provided below.

\begin{quote}\small
\emph{``I have the following idea that I want you to investigate and tell me if
it's feasible.
\begin{itemize}[nosep,leftmargin=*]
\item First, abandon all the \lkb{replace\_\allowbreak with\_\allowbreak rho} stuff you are doing.
\item We know that \lk{xs} may have duplicates.
Let $\lk{xs}[i] = \lk{xs}[j]$ for $i \neq j$.
I claim that it should be the case that $\lk{vs0}[i] = \lk{vs0}[j]$.
This will be proven structurally using the ANF conversion definition
and the definition of the semantics. It may even follow from the
definition of the conversion that syntactically $\lk{es}[i] = \lk{es}[j]$,
which will make proving $\lk{vs0}[i] = \lk{vs0}[j]$ easier by the
determinism of the semantics.
\item Then, using the alpha-equivalence result, \lkb{preord\_val}
$\lk{vs'0}[i] = \lk{vs'0}[j]$, which should suffice to make the proof go through.
\end{itemize}
Examine if this approach works, and if it does, implement it. If not, tell me
what the problem is.''}

\medskip
\emph{``I want us to prove the following, then we will use this to show the
remaining admit.
\begin{itemize}[nosep,leftmargin=*]
\item \lkb{anf\_\allowbreak cvt\_\allowbreak rel}
$S\;e\;\mathit{vnames}\;\mathit{cnstrs}\;S'\;C\;x$
$\rightarrow\; x = \mathit{vnames}[i]$
$\rightarrow\;$\lkb{eval\_\allowbreak fuel}
$\mathit{env}\;e\;v\;f\;t$
$\rightarrow\; v = \mathit{vnames}[i]$
\item \lkb{anf\_\allowbreak cvt\_\allowbreak rel\_\allowbreak exps}
$S\;\mathit{es}\;\mathit{vnames}\;\mathit{cnstrs}\;S'\;C\;\mathit{xs}$
$\rightarrow\; \mathit{xs}[j] = \mathit{vnames}[i]$
$\rightarrow\;$\lkb{eval\_\allowbreak fuel\_\allowbreak exps}
$\mathit{env}\;\mathit{es}\;\mathit{vs}\;f\;t$
$\rightarrow\; \mathit{vs}[j] = \mathit{vnames}[i]$
\end{itemize}
First make sure it is true, then start proving it. You will need mutual
induction. If you find that it is not true, tell me instantly.''}

\medskip
\emph{``I am thinking how to use the lemmas to remove the first admit.
Say that we have a more general version of \lk{consistent} that relates the
second list with any relation, not just equality. Then:
\begin{itemize}[nosep,leftmargin=*]
\item Lemma~1:
\lkb{eval\_\allowbreak fuel\_\allowbreak many} $\mathit{vs'}\;\mathit{es}\;\mathit{vs0}\;\mathit{fs}\;\mathit{ts}$
$\rightarrow\;$\lkb{anf\_\allowbreak cvt\_\allowbreak correct\_\allowbreak exps} $\mathit{vs'}\;\mathit{es}\;\mathit{vs0}\;\mathit{fs}\;\mathit{ts}$
$\rightarrow\;$\lkb{Forall2} $(\lambda\; v\; v'.\;$\lkb{anf\_\allowbreak val\_\allowbreak rel} $v\; v')$ $\mathit{vs0}\;\mathit{vs'0}$
$\rightarrow\;$\lkb{consistent} \lk{preord\_val} $\mathit{vs0}\;\mathit{vs'0}$.
You will probably need the anf-equiv lemma for this.
\item Lemma~2:
\lkb{consistent} \lk{preord\_val} $\mathit{xs}\;\mathit{vs}$
$\rightarrow\; \exists\; \mathit{xs},\;$\lkb{getlist} $\mathit{xs}\;($\lkb{setlist} $\mathit{xs}\;\mathit{vs}\;\rho) = \mathit{vs'}$
$\land\;$\lkb{Forall2} \lk{preord\_val} $\mathit{vs}\;\mathit{vs'}$.
\end{itemize}
Then these two together should suffice to remove the admit.''}
\end{quote}

\paragraph{Corrections and error diagnosis.}
The human caught errors ranging from silent proof erasure to tactic
misapplication. In a few cases the user fed the proof state directly into the
LLM to assist in fixing low-level tactic errors.
\begin{quote}\small
\emph{``It compiled because you erased the proof. The goal is to fix the
proof, please add it back.''}

\medskip
\emph{``It is just further down the file. You are trying to use a lemma that
is proved later on.''}

\medskip
\emph{``You are inverting \lk{HL} which is a \lk{var}. This makes no
sense; it is probably due to inconsistent naming. You should invert \lk{H1}
and then \lk{zify; lia} will work.''}
\end{quote}

\paragraph{Nudging and continuing.}
A substantial portion of prompts were simple continuations, often needed after
the LLM paused waiting for direction or after context compaction:
\begin{quote}\small
\emph{``Why did you stop?''}
\quad
\emph{``Carry on please.''}
\quad
\emph{``Continue from where you left off.''}
\end{quote}


\paragraph{Discussion of proof limitations.}
When the LLM identified the divergence preservation obstacle
(\cref{sec:divergence-preservation}), the interaction shifted to a
collaborative discussion:
\begin{quote}\small
\emph{``Can you explain the issue with the bound and the OOT case and what it
would take to resolve?''}

\medskip
\emph{``Can you give me a counterexample of the property to put in the
paper?''}
\end{quote}

\end{document}